\newcommand{\R}{\mathbb{R}}
\newcommand{\N}{\mathbb{N}}
\newcommand{\ket}[1]{| #1 \rangle}
\newcommand{\comment}[1]{}
\newcommand{\hide}[1]{}
\newcommand{\EE}{\mathbb E}
\renewcommand{\phi}{\varphi}
\newcommand{\sgn}{{\rm sgn}}
\newcommand{\UAG}{{\rm UVS}}
\newcommand{\npart}{\mbox{$n$-par}\-tite }
\newtheorem{theorem}{Theorem}
\newtheorem{proposition}{Proposition}
\newtheorem{lemma}{Lemma}
\title{Simulating equatorial measurements on GHZ states with finite expected communication cost}
\date{}
\author{G. Brassard\footnote{DIRO, Universit\'e de Montr\'eal, Montr\'eal, Qu\'ebec, Canada} \and M. Kaplan$^*$ }
\begin{document}
\maketitle

\begin{abstract}
The communication cost of
simulating probability distributions obtained by measuring
quantum states is a natural way to quantify quantum non-locality.
While much is known in the case of  bipartite entanglement,
little has been done in the multipartite setting. 
In this paper, we focus on the GHZ state. Specifically,
equatorial measurements lead to correlations
similar to the ones obtained with Bell states.
We~give a protocol to simulate
these measurements on the \npart GHZ state using~ $O(n^2)$~bits of
communication on average.
\end{abstract}

\section{Introduction}

The issue of non-locality in quantum physics was raised
in 1935 by Einstein, Podolsky and Rosen~\cite{EPR35}.
Thirty years later, John Bell proved that quantum physics yields correlations
that cannot be reproduced by classical local hidden variable theories~\cite{bell64}.
This momentous discovery led to the more general question of quantifying
quantum non-locality.
Not only is this question relevant for 
the foundations of quantum physics, but
it is directly related to our understanding of the computational
power of quantum resources.

A natural quantitative approach to non-locality is
to study the amount of resources required
to reproduce probabilities obtained by measuring quantum states.
In this paper,
we consider the simulation of these distributions
using classical communication.
This approach was introduced independently by several
authors~\cite{maudlin92,bct99,steiner00}.
It~led to a series of results, culminating with the protocol of Toner and Bacon
to simulate von Neumann measurements on Bell states with a single bit
of communication~\cite{tb03}.
Later, Regev and Toner extended this result by giving a
simulation of binary von Neumann measurements on arbitrary
bipartite states using two classical bits~\cite{rt07}.

We focus here on \emph{multipartite} entanglement, and more specifically on GHZ
states~\cite{GHZ89}.
Unlike the bipartite case, which has been the topic of intensive investigation,
the simulation of multipartite entanglement 
is still teeming with major open problems.

The easiest situation arises in the case of \emph{equatorial} measurements
on a GHZ state because all the marginal probability distributions obtained by
tracing out one or more of the parties are uniform. Hence, it suffices in
this case to simulate the \npart correlation,
henceforth called the \emph{full correlation}. 
(Once this has been achieved, all the marginals can easily be made uniform~\cite{gp10}.)
Making the best of this observation, Bancal, Branciard and Gisin
have given a protocol to simulate equatorial measurements on the
tripartite and fourpartite GHZ states at an \emph{expected} cost of
10 and 20~bits of communication~\cite{BBG10}.
However, the amount of communication entailed by their protocol
is unbounded in the worst case.
More recently, Branciard and Gisin impoved this in the tripartite case with
a protocol using 3 bits of communication in the worst case~\cite{BG11}.

In this paper, we give a protocol to simulate equatorial measurements on
the $n$-partite GHZ state. For any 
measurements, our protocol has an expected cost of $O(n^2)$ bits
of communication, where the expectation is taken over the inner randomness
of the protocol.

The paper is organized as follows. In the Section~\ref{sec2},
we give the structure of the distribution arising from equatorial
measurements on GHZ states, and then show how to
simulate it.
The main technical tool that we use is a protocol to
sample uniform
vectors on connected subsets of the circle. We call this
task Uniform Vector Sampling.
The protocol to sample those vectors is given in Section~\ref{sec3}.

The structure of our simulation is inspired by the protocol proposed by
Toner and Bacon to simulate von Neumann measurements on Bell states.
It can be divided in two parts. In the first part, the players communicate to sample
shared random vectors on the sphere. The distribution of
these vectors depend on the player's input.
In the second part, they apply a post-processing to compute their outputs.
While the sampling part in the protocol of Toner and Bacon is simple,
it is more involved in our case. The goal of Uniform Vector Sampling
is to sample vectors with the correct distribution.

\section{Simulating equatorial measurements}
\label{sec2}
We consider the family of GHZ states $\ket{\Psi_n} = \frac{1}{\sqrt 2} (\ket{0^n} + \ket{1^n})$,
and the distribution generated by the following process:
$n$ players each receive a qubit of $\ket{\Psi_n}$.
Each player applies a bipartite measurement to its share. Let
$\mbox{$o_i \in \{-1, 1 \}$}$ denote the output of the $i^{th}$ player.
The problem is to simulate the probability distribution over the player's
output using hidden variables and communication.

The measurement operators corresponding to equatorial measurements
are on the equator of the Bloch sphere and therefore, 
can be parametrized by a single polar angle.
Denote $\alpha_i$ the
angle corresponding to the~$i^{th}$ player's measurement.
It is known that the distribution arising from such measurements is fully characterized by
the full correlation (see~\textit{e.g.}~\cite{BBG10}).
\begin{proposition}
The distributions of the outputs $\{o_i\}$ is characterized by the following relations:
\begin{itemize}
\item The full correlation is given by $\EE \left [\prod_{i=1}^n o_i \right ]= \cos(\sum_{i=1}^n \alpha_i)$.
\item The marginal distributions are given by $\mathbb E \left [\prod_{i \in S} o_i \right ] = 0$ for all $S \varsubsetneq [n]$.
\end{itemize}
\end{proposition}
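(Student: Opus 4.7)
The plan is to express each subset correlation $\EE[\prod_{i \in S} o_i]$ as an expectation value on $\ket{\Psi_n}$ and evaluate it by direct linear-algebraic computation. First I would observe that the $\pm 1$-valued observable implementing the equatorial measurement at angle $\alpha$ is $O_\alpha = \cos(\alpha)\, X + \sin(\alpha)\, Y$, which in the computational basis reads $e^{-i\alpha}\ketbra{0}{1} + e^{i\alpha}\ketbra{1}{0}$. Since each $o_i \in \{-1,+1\}$, marginalising over the outcomes $o_j$ for $j \notin S$ amounts to replacing the corresponding factors by the identity, yielding
\begin{equation*}
\EE\left[\prod_{i \in S} o_i\right] \;=\; \bra{\Psi_n}\Bigl(\bigotimes_{i \in S} O_{\alpha_i}\Bigr) \otimes I_{[n]\setminus S}\,\ket{\Psi_n}.
\end{equation*}

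Next I would compute the action of this operator on $\ket{\Psi_n} = \frac{1}{\sqrt 2}(\ket{0^n}+\ket{1^n})$. Each $O_{\alpha_i}$ is off-diagonal: it sends $\ket{0}$ to $e^{i\alpha_i}\ket{1}$ and $\ket{1}$ to $e^{-i\alpha_i}\ket{0}$; the identity on positions $j \notin S$ leaves those qubits untouched. Applied to $\ket{0^n}$, the tensor product therefore produces the bit string having $\ket{1}$ on positions of $S$ and $\ket{0}$ elsewhere, multiplied by $\exp\!\bigl(i\sum_{i\in S}\alpha_i\bigr)$; applied to $\ket{1^n}$, it produces the complementary string, multiplied by $\exp\!\bigl(-i\sum_{i\in S}\alpha_i\bigr)$.

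Finally I would take the inner product with $\ket{\Psi_n}$. When $S = [n]$, the two resulting kets are exactly $\ket{1^n}$ and $\ket{0^n}$, each overlapping $\ket{\Psi_n}$ with amplitude $1/\sqrt 2$, and the conjugate phases combine to give $\tfrac{1}{2}(e^{i\sum \alpha_i}+e^{-i\sum\alpha_i}) = \cos(\sum_{i=1}^n \alpha_i)$. When $\emptyset \neq S \varsubsetneq [n]$, the resulting kets are strings containing both $\ket{0}$'s and $\ket{1}$'s, hence orthogonal to both $\ket{0^n}$ and $\ket{1^n}$, so the overlap vanishes. The characterisation assertion then follows from the standard Walsh/Fourier fact that a probability mass function on $\{-1,+1\}^n$ is uniquely determined by its collection of subset-product expectations $\{\EE[\prod_{i\in S} o_i]\}_{S \subseteq [n]}$. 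The computation is entirely routine; there is no genuine obstacle, the only book-keeping subtlety being that the statement implicitly concerns nonempty proper subsets (the empty case giving the trivial normalisation $\EE[1]=1$).
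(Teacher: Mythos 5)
Your computation is correct throughout: the observable $O_\alpha = \cos\alpha\,X + \sin\alpha\,Y = e^{-i\alpha}\ketbra{0}{1} + e^{i\alpha}\ketbra{1}{0}$ is indeed the $\pm 1$-valued equatorial measurement, marginalising over the players outside $S$ does replace their factors by the identity, and the off-diagonal action of each $O_{\alpha_i}$ makes both the full-correlation value $\cos(\sum_i \alpha_i)$ and the vanishing of proper-subset correlations fall out by inspection of the resulting basis strings. The invocation of the Walsh/Fourier characterisation at the end correctly justifies the word "characterized." The paper itself offers no proof here — it simply records the proposition as known, citing Bancal, Branciard and Gisin — so your argument is a clean, self-contained verification of the stated fact rather than a departure from any argument in the text.
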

We denote $S^2$ the sphere in dimension 3 and $S^1$
the sphere in dimension 2.
A vector on $S^1$ is parametrized by a single polar angle,
or equivalently a real number modulo $2\pi$.
An interval on $S^1$ is 
a connected subset of $S^1$ or equivalently of $\R/ 2\pi \mathbb Z$.

Our simulation is based on a procedure to sample uniform vectors on intervals of $S^1$,
when the description of this subset is shared among several players.
For~$k,n \geq 1$, we introduce the following task, called Uniform Vector Sampling and denoted
$\UAG(n,k)$.
The $n$ players each receive the angles $\alpha_1, \ldots, \alpha_n$, respectively. Each player
computes a message depending on his input and on a 
public random variable $r$ and sends it to a referee.
At the end, the referee has to output a uniform angle $\theta$ on the
interval $\left[\sum_{i=1}^n \alpha_i - \pi/ 2^k, \sum_{i=1}^n \alpha_i + \pi/ 2^k\right]$.
We measure the communication cost of a protocol for $\UAG$ by
considering the total length of all messages sent from the players to the referee.

\begin{theorem}
\label{thm:sampling}
For any $n,k \geq 1$, there exists a protocol for $\UAG(n,k)$
with expected communication cost at most $n(n+k)$.
\end{theorem}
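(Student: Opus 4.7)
The plan is to combine a deterministic prefix transmission with shared-randomness rejection sampling. First, each player $i$ sends the first $n+k$ bits of the binary expansion of $\alpha_i/(2\pi)\in[0,1)$, costing exactly $n(n+k)$ bits. The referee sums the resulting truncations to obtain $\tilde\sigma$, and because each of the $n$ truncation errors is at most $2\pi/2^{n+k}$, the referee knows that $\sigma$ lies in a window of width $\epsilon := 2\pi n/2^{n+k}$ to the right of $\tilde\sigma$.

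Next, the referee uses public randomness to draw a candidate $\theta$ uniformly on the interval $J = [\tilde\sigma - \pi/2^k,\ \tilde\sigma + \epsilon + \pi/2^k]$, which is guaranteed to contain the true target $I = [\sigma-\pi/2^k,\sigma+\pi/2^k]$ no matter where $\sigma$ lies within its window. The referee outputs $\theta$ if it can certify $\theta\in I$ from the current bits; if $\theta$ falls in the ambiguous band (width $2\epsilon$) near the boundary of $I$, it asks each player for one more bit of $\alpha_i$ (cost $n$) and reconsiders; otherwise it rejects $\theta$ and draws a fresh candidate. Conditional on acceptance, $\theta$ is uniform on $I$ because $I\subseteq J$ and $\theta$ is drawn uniformly on $J$ independently of the players' inputs, so the set of accepted $\theta$'s is exactly $I$.

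The cost analysis has two parts. Per candidate, the acceptance probability is $|I|/|J|=(1+n/2^n)^{-1}$, bounded away from zero, so the expected number of candidates is $O(1)$; the ambiguity probability on a given candidate starts at $2\epsilon/|J|=O(n/2^n)$ and halves with every extra bit per player, so the expected extra bits per candidate form a convergent geometric series summing to $O(n)$. Summed across the $O(1)$ expected candidates, the total expected overhead above the setup phase is $O(n)$, which can be absorbed into the bound $n(n+k)$ by shaving a constant from the initial prefix length (e.g.\ starting with $n+k-O(1)$ bits per player). The main technical obstacle in formalizing this plan is the correctness of the on-demand bit requests: one must verify that the stopping rule, which itself depends on $\theta$, does not bias the conditional distribution of $\theta$ away from uniformity on $I$. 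This reduces to the independence of the candidate draws from the players' inputs, together with a standard dyadic argument that the accept event converges almost surely to $\{\theta\in I\}$ as more bits are requested; handling the modular wrap-around at the endpoints of $S^1$ requires only a routine case split.
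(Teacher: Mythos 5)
Your route is genuinely different from the paper's. The paper builds the sampler recursively: the key ingredient is Lemma~\ref{lm:tech}, which writes the uniform distribution on $[0,1]$ as a geometric mixture (over a shared index $j$) of sums of two uniforms on nested intervals, and the induction step splits the $n$-party task into $\UAG(n-1,k+j+1)$ and $\UAG(1,k+j+1)$ at a random refined precision $k+j+1$. You instead transmit truncated binary expansions and do rejection sampling against a slightly inflated window, refining lazily. The idea is sound and arguably more transparent than the convolution lemma, and it does yield $O(n^2+nk)$ expected bits, which is enough for the paper's headline $O(n^2)$ claim.

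Two gaps keep it from proving the theorem as stated. First, $\UAG$ is defined as a one-round task: each player sends a single message to the referee depending only on its input $\alpha_i$ and the public randomness $r$; the cost is the total length of those messages. Your protocol is adaptive -- the referee requests additional bits after seeing $\theta$ and the initial transcripts. A player cannot predict how many bits will be needed from $\alpha_i$ and $r$ alone, because the decision depends on the \emph{sum} $\sigma$ of all the angles. So as written this is not a protocol for $\UAG$ (and in the Section~\ref{sec2} application, $A_n$ plays the referee, so the requests would be genuine backward communication). Second, even granting interactivity, the cost is $n(n+k)+O(1)$, not $\le n(n+k)$, and the proposed fix -- shaving a constant number of prefix bits per player -- does not work uniformly in $n$. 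Cutting $c$ bits saves $cn$ up front, but it multiplies $\epsilon_0/|I| = n/2^n$ by $2^c$, which for small $n$ (already $n/2^n=1/2$ at $n=1,2$) pushes the rejection probability and the ambiguity probability toward $1$, so the expected number of candidates and the expected refinement rounds blow up by far more than $cn$ bits. A correct write-up would either handle small $n$ by a separate argument (e.g.\ the paper's own one-player shift trick, which gives $l_{1,k}\le k$), or settle for a bound of the form $n(n+k)+C$; with $k=1$ this is still $O(n^2)$, but it is not the claimed constant.
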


We now show how to simulate
equatorial measurement on GHZ states, given a protocol for $\UAG$.
Toner and Bacon proposed a simulation of binary measurements on Bell states,
using a single bit of communication.
In the bipartite, the correlation between the player's output
is a scalar product of two vectors on $S^2$.
We sketch their protocol.
For a vector $a \in S^2$,
denote $S^+(a)$ the half sphere centered on $a$,
$S^+(a) = \{\lambda \in S^2: \langle a, \lambda \rangle \}$.
We denote the sign function with range $\{-1, +1\}$ by $\sgn$.
Toner and Bacon prove the following theorem.
\begin{theorem}[\cite{tb03}]
\label{thm:TB}
Let $a, b$ be vectors in $S^2$, and $\lambda_1, \lambda_2$ be uniformly distributed
on $S^+(a)$. Then $$\EE [\sgn \langle \lambda_1 +\lambda_2 , b\rangle] = \langle a,b\rangle.$$
\end{theorem}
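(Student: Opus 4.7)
My plan is to enlarge the range of $\lambda_2$ from $S^+(a)$ to the whole sphere so that Archimedes' theorem can be applied to a single scalar product, and then to correct for the over-counting via an antisymmetry argument.

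Let $\mu_\pm$ denote the uniform measure on $S^\pm(a)$ and $\mu$ the uniform measure on $S^2$, so $\mu = \tfrac12(\mu_+ + \mu_-)$. Averaging in $\lambda_2$ first and in $\lambda_1 \sim \mu_+$ second, I would rewrite the target expectation as
\[
\EE_{\lambda_1,\lambda_2\sim \mu_+}\bigl[\sgn\langle \lambda_1+\lambda_2,b\rangle\bigr]
= 2\,\EE_{\lambda_1\sim \mu_+,\,\lambda_2\sim \mu}\bigl[\sgn\langle \lambda_1+\lambda_2,b\rangle\bigr]
- \EE_{\lambda_1\sim\mu_+,\,\lambda_2\sim\mu_-}\bigl[\sgn\langle \lambda_1+\lambda_2,b\rangle\bigr].
\]
The key observation is that the last term vanishes: substituting $\lambda_2 \mapsto -\lambda_2$ (which is again $\mu_+$-distributed) turns it into $\EE_{\lambda_1,\lambda_2\sim \mu_+}[\sgn\langle \lambda_1-\lambda_2,b\rangle]$, and this integrand flips sign under the swap $\lambda_1 \leftrightarrow \lambda_2$, so the expectation equals its own negative.

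For the remaining term, I would condition on $\lambda_1$ and invoke Archimedes' theorem: for $\lambda_2$ uniform on $S^2$ and any unit vector $b$, the scalar $\langle \lambda_2, b\rangle$ is uniform on $[-1,1]$. Since $|\langle \lambda_1,b\rangle|\leq 1$, the conditional probability that $\langle \lambda_1+\lambda_2,b\rangle>0$ equals $(1+\langle \lambda_1,b\rangle)/2$, making the conditional expectation of the sign exactly $\langle \lambda_1, b\rangle$. Averaging over $\lambda_1 \sim \mu_+$ and using that, by rotational symmetry around $a$, $\EE[\lambda_1] = \tfrac12 a$ (its component along $a$ being $\int_0^{\pi/2}\cos\theta\sin\theta\,d\theta = 1/2$), the whole expression collapses to $2\bigl\langle \tfrac12 a, b\bigr\rangle = \langle a, b\rangle$.

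The step I expect to be the crux is spotting the extension-and-cancellation trick: one has to notice that the \emph{mixed-hemisphere} expectation of $\sgn\langle \lambda_1+\lambda_2,b\rangle$ carries, after flipping $\lambda_2$, the swap-antisymmetry needed to kill it outright. Once that move is in place, Archimedes' theorem supplies all the analytic content and the rest is bookkeeping.
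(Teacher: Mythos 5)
Your proof is correct, and note that the paper under review does not actually supply a proof of this theorem: it is stated as a citation of Toner and Bacon~\cite{tb03}. So there is no ``paper's own proof'' to compare against; what you have written is a valid, self-contained derivation of the lemma the paper takes on faith.

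The three ingredients all check out. The decomposition $\EE_{\mu_+,\mu_+} = 2\,\EE_{\mu_+,\mu} - \EE_{\mu_+,\mu_-}$ follows from $\mu = \tfrac12(\mu_+ + \mu_-)$ applied to the $\lambda_2$-marginal. The mixed term dies exactly as you say: sending $\lambda_2\mapsto-\lambda_2$ turns it into $\EE_{\lambda_1,\lambda_2\sim\mu_+}[\sgn\langle\lambda_1-\lambda_2,b\rangle]$, which is antisymmetric under the exchange $\lambda_1\leftrightarrow\lambda_2$ of two i.i.d.\ variables, hence zero (the ambiguity of $\sgn(0)$ lives on a null set and is harmless). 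Archimedes' hat-box theorem gives $\langle\lambda_2,b\rangle\sim U[-1,1]$, so conditioning on $\lambda_1$ yields the conditional expectation $\langle\lambda_1,b\rangle$ because the threshold $-\langle\lambda_1,b\rangle$ never falls outside $[-1,1]$. Finally $\EE_{\mu_+}[\lambda_1]=\tfrac12 a$ by the rotational-symmetry and $\int_0^{\pi/2}\cos\theta\sin\theta\,d\theta=\tfrac12$ computation, giving $2\langle\tfrac12 a,b\rangle=\langle a,b\rangle$. This extension-and-cancellation route is arguably cleaner than the direct spherical integral one usually sees for this identity, since the only nontrivial analytic input is Archimedes' theorem applied once.

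One small pedantic point: the paper's definition of $S^+(a)$ contains a typo (``$S^+(a)=\{\lambda\in S^2:\langle a,\lambda\rangle\}$''); your argument implicitly and correctly uses $S^+(a)=\{\lambda:\langle a,\lambda\rangle\geq 0\}$, which is evidently what is intended.
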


To complete the simulation, it suffices to notice that shared uniform vectors on $S^+(a)$ can be sampled
efficiently by players using shared randomness and communication, even if only
one player has a full description of $a$.
The idea is to first sample a uniform random vector on the sphere, and then the player
that knows~$a$ tells the other if he has to flip the random vector 
in order to get a vector in $S^+(a)$.
This requires to send a single bit of communication.

Our simulation is based on the following observation.
Consider
 $d = (\cos \sum_{i=1}^{n-1} \alpha_i, \sin \sum_{i=1}^{n-1} \alpha_i, 0)$ and
$a_n = (\cos \alpha_n, -\sin \alpha_n, 0)$.
These are unit vectors on
$S^1$, embeded in $\R^3$ to apply Theorem~\ref{thm:TB}.
For these vectors, we have $\langle d, a_n \rangle = \cos \sum_{i=1}^n a_i$.
Therefore, if $\lambda_1$ and $\lambda_2$ are two vectors sampled uniformly
on $S^+(d)$, Theorem~\ref{thm:TB} gives
$\EE [\sgn \langle \lambda_1+\lambda_2, a_n \rangle]=  \cos \sum_i \alpha_i$.

We now describe the simulation in more details.
The players are denoted $A_1, \ldots, A_n$. Before receiving their inputs,
they prepare a shared variable $r$, used for $\UAG$.
In addition, they prepare some shared uniform random bits $b_i \in \{-1, +1\}$ for $i=1, \ldots, n-1$.
In our simulation, we only need to apply Uniform Vector Sampling with $k=1$.

\begin{enumerate}
\item\label{step1}
For $i=1, \ldots, n-1$, the players $A_i$ run the protocol for $\UAG(n-1,1)$, sending
their messages to $A_n$.
\item \label{step2}
Using the messages he received,
$A_n$ sets $\theta_1$ uniform on the interval $\left[\sum_{i=1}^{n-1} \alpha_i - \pi/2, \sum_{i=1}^{n-1} \alpha_i + \pi/2 \right]$.
\item\label{step3}
The players repeat steps~\ref{step1} and~\ref{step2} to allow $A_n$ to sample
another angle $\theta_2$ with the same distribution.
\item\label{step4}
Player $A_n$ samples $u_1$ and $u_2$ uniformly on $[-1,1]$ and 
for $i=1,2$, sets
\begin{eqnarray*}
\phi_i &=&\arccos u_i\\
\lambda_i&=& (\cos \theta_i \cos \phi_i, \sin \theta_i \cos \phi_1, \sin \phi_i).
\end{eqnarray*}
\item\label{step5} For $i=1,\ldots, n-1$, the player $A_i$ outputs $o_i= b_i$.
\item\label{step6} The player $A_n$ outputs $o_n = \left(\prod_{i=1}^{n-1} b_i \right)\cdot
\sgn \langle \lambda_1 + \lambda_2, a_n\rangle$, where we defined 
$\mbox{$a_n= (\cos \alpha_n,- \sin \alpha_n, 0)$}$.
\end{enumerate}

After step~\ref{step3}, player $A_n$ has the complete description of two angles $\theta_1$
and~$\theta_2$ uniformly distributed on
$\left[\sum_{i=1}^{n-1}\alpha_i  - \frac \pi {2}, \sum_{i=1}^{n-1} \alpha_i  + \frac \pi {2} \right]$.
The purpose of 
step~\ref{step4} is to transform the angles in uniform random vectors on $S^+(d)$, where
$d$ is the vector with coordinates
$ (\cos \sum_{i=1}^{n-1}\alpha_i, \sin \sum_{i=1}^{n-1}\alpha_i, 0)$.
Since $d$ is on the equator, it is sufficient to assign a random
latitude to the vectors whose longitudes are $\theta_1$ and $\theta_2$.
Finally, after steps~\ref{step5} and~\ref{step6}, we have
\begin{eqnarray*}
\EE \prod_{i \in S} o_i &=& 0 \text{ for any } S \varsubsetneq [n],\\
\EE \prod_{i=1}^n o_i &=& \EE [\sgn \langle \lambda_1 + \lambda_2, a_n\rangle] \\
&  =& \cos \sum_{i=1}^n \alpha_i \text{  by Theorem~\ref{thm:TB}.}
\end{eqnarray*}
Sampling the angles $\theta_1$ and $\theta_2$,
 can be done with $O(n^2)$ expected bits of communication. Therefore, the whole
 protocol can be done with $O(n^2)$ expected bits of communication.

\section{Uniform Vector Sampling}
\label{sec3}

The goal of this section is to prove Theorem~\ref{thm:sampling}.
Observe that in the simulation, we only need the case $k=1$. Nevertheless,
our inductive proof requires to prove the stronger statement given in Section~\ref{sec2}.

\subsection*{The base case: $n=1$}
For $n=1$, there is a single input $\alpha_1$. Fix $k \geq 1$.
Let $\delta$ be chose uniformly at random on $S^1$.
The player sends
$$t = \min \left \{i \in \N: \delta + i \frac \pi {2^{k-1}} \in [\alpha_1 - \frac \pi {2^k}, \alpha_1 + \frac \pi {2^k}] 
\right \}$$
to the referee, who computes
$\theta = \delta + t \frac{\pi}{2^{k-1}}$.
The resulting angle $\theta$ is uniformly distributed on
$[\alpha_1 - \frac{\pi}{2^k}, \alpha_1 + \frac{\pi}{2^k}]$.
Notice that since $t \in [2^{k} -1]$, the length of the message is at most $k$.

\subsection*{The induction step}
Let $n>1$. The following Lemma is the main technical tool
that we use for the induction. It explains how to generate
uniformly distributed variables from specific non-uniform ones.
We first give and prove Lemma~\ref{lm:tech}, and then use it
to prove the induction.
\begin{lemma}
\label{lm:tech}
Let $\mathcal D_i^-$ denote the uniform distribution on $[0, 1/2^i]$ and
$\mathcal D_i^+$ denote the uniform distribution on $[1-1/2^i, 1]$. Let $\mathcal D$
be the distribution on $t$ defined by the following procedure:
\begin{itemize}
\item Pick an integer $i \geq 0$ with probability $1/2^{i+1}$, and $r$ uniform in $\{-1, +1\}$.
\item If $r=-1$, sample $t_1, t_2 \sim \mathcal D_i^-$.
\item Otherwise, sample $t_1, t_2 \sim \mathcal D_i^+$.
\item Set $t=t_1+t_2$.
\end{itemize}
Then $\mathcal D$ is the uniform distribution on $[0,1]$.
\end{lemma}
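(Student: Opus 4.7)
The plan is to check directly that the density of $t$ (interpreted modulo $1$, since $t_1+t_2$ can exceed $1$ and the construction will be applied to angles on a circle) equals the constant $1$ almost everywhere on $[0,1]$. I will split the mixture according to $i$, handle $i=0$ by hand, and exploit the reflection symmetry $\mathcal D_i^+\stackrel{d}{=}1-\mathcal D_i^-$ to halve the bookkeeping for $i\ge 1$.

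For $i=0$ both $\mathcal D_0^\pm$ equal the uniform distribution on $[0,1]$, and the sum of two independent such uniforms, taken modulo $1$, is itself uniform on $[0,1]$: the tent density of $t_1+t_2$ on $[0,2]$ folded around $1$ collapses to the constant $1$, as a short direct calculation shows. This branch contributes $\tfrac12$ to the total density.

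For $i\ge 1$ the raw sum $t_1+t_2$ lives in $[0,1/2^{i-1}]\subseteq[0,1]$ in the $-$ case, so no wrap-around is needed, and the conditional density $\tilde g_{i,-}$ is the standard triangular density on $[0,1/2^{i-1}]$. The reflection $X\mapsto 1-X$ identifies $\mathcal D_i^-$ with $\mathcal D_i^+$, hence $\tilde g_{i,+}(x)=\tilde g_{i,-}(1-x)$. Setting $S(x)=\sum_{i\ge 1}\tilde g_{i,-}(x)/2^{i+2}$, the full density becomes $p(x)=\tfrac12+S(x)+S(1-x)$, so it suffices to prove $S(x)+S(1-x)=\tfrac12$.

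The main technical step is to show $S(x)=(1-x)/2$ for all $x\in(0,1]$. Fix such $x$ and let $n\ge 1$ be the unique integer with $x\in(1/2^n,1/2^{n-1}]$. Then $\tilde g_{i,-}(x)=0$ for $i\ge n+1$; for $i=1,\dots,n-1$ we are on the ascending slope, so $\tilde g_{i,-}(x)=2^{2i}x$; for $i=n$ we are on the descending slope, so $\tilde g_{n,-}(x)=2^{n+1}-2^{2n}x$. Plugging these values into the sum and using $\sum_{i=1}^{n-1}2^{i-2}=2^{n-2}-\tfrac12$, the coefficient of $x$ reduces to $(2^{n-2}-\tfrac12)-2^{n-2}=-\tfrac12$, all $n$-dependent terms cancel, and one obtains $S(x)=\tfrac{1-x}{2}$. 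Then $S(x)+S(1-x)=\tfrac12$, hence $p\equiv 1$ on $(0,1)$. The place requiring the most care is precisely this cancellation: one must verify that the $2^{n-2}x$ contributed by the ascending slopes for $i<n$ exactly annihilates the $-2^{n-2}x$ coming from the descending slope at $i=n$, a small telescoping miracle that makes $S$ independent of the level $n$.
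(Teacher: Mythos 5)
Your proof is correct, and the computations check out: $\sum_{i=1}^{n-1}2^{i-2}=2^{n-2}-\tfrac12$, the ascending-slope contributions cancel the descending-slope term, and $S(x)=(1-x)/2$ indeed follows, giving $p\equiv 1$. You also noticed something the paper glosses over: as literally stated, $t=t_1+t_2$ can lie in $[0,2]$ (the $i=0$ branch), so the lemma cannot hold verbatim. You resolve this by reading the sum modulo $1$, which is the natural reading given that the lemma is applied to angles. The paper resolves it differently (and silently): its ``direct calculation'' of $(f_i^-*f_i^-)$ — a triangle supported on $[0,1/2^i]$ with slope $2^{2(i+1)}$ — is in fact the density of $(t_1+t_2)/2$, not of $t_1+t_2$, so the paper is implicitly proving that the \emph{half-sum} is uniform on $[0,1]$ (consistent with the factor-of-two discrepancy $v_{j,b}\cdot\pi/2^{k-1}$ versus $(v_1+v_2)\cdot\pi/2^k$ in its application of the lemma). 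Both fixes yield the same conclusion; yours is arguably cleaner because it uses the genuine convolution of two uniforms on $[0,1/2^i]$ rather than a rescaled one, and because wrap-around on the circle is exactly what the protocol needs.

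Beyond this interpretive point, the organization also differs. The paper fixes $x<1/2$, writes $\rho(x)$ as a single finite sum up to $j=\max\{j':x\le 1/2^{j'}\}$, and telescopes it directly to $1$, invoking symmetry for $x>1/2$. You instead fold the $i=0$ tent into the constant $\tfrac12$, exploit the reflection $\mathcal D_i^+\stackrel{d}{=}1-\mathcal D_i^-$ to write $p(x)=\tfrac12+S(x)+S(1-x)$, and prove the self-contained identity $S(x)=(1-x)/2$. This buys a cleaner structure — the level $n$ at which the descending slope kicks in is shown to drop out entirely, rather than being case-split — at the cost of one extra reflection step. The underlying ``telescoping miracle'' (the geometric sum of ascending slopes exactly cancelling the descending one) is the same engine in both proofs.
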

\begin{proof}
Denote $U_i^-=[0, 1/2^i]$ and $U_i^+ =[1-1/2^i, 1]$. We define the density functions associated
to the distributions $\mathcal D_i^+$ and $\mathcal D_i^-$,

\begin{minipage}[b]{0.4\linewidth}
\raggedleft
$$
f_i^+(x) = \begin{cases} 2^i & \text{ if } x\in U_i^+,\\ 0 & \text{ otherwise,} \end{cases}$$
\end{minipage}
\begin{minipage}[b]{0.5\linewidth}
$$\text{and }f_i^-(x) = \begin{cases} 2^i & \text{ if } x\in U_i^-,\\ 0 & \text{ otherwise.} \end{cases}$$
\end{minipage}\\
By definition, the density $\rho_i$ of $t_1+t_2$ for a fixed $i$ is
$$\rho_i = \frac 1 2( f_i^+ * f_i^- + f_i^- * f_i^-),$$
where $*$ denotes the convolution product of two functions.
By direct calculation, we have
$$(f_i^- * f_i^-)(x) =
\begin{cases} 2^{2(i+1)}x &\text{ if } x \in [0, 1/2^{i+1}],\\
		2^{i+2}-2^{2(i+1)}x & \text{ if } x\in [1/2^{i+1}, 1/2^i],\\
		0\text{ otherwise.}\end{cases}$$
and
$$(f_i^+ * f_i^+)(x)= (f_i^- * f_i^-)(1-x).$$
Let $\rho$ denote the density of the distribution $\mathcal D$. We now calculate $\rho(x)$.
Notice that $f_0^- = f_0^+$, and for $i>0$, $f_i^-$ and $f_i^+$ have disjoint supports.
Assume that $x< 1/2$ (the other case is similar). 
In that case, $f_i^+(x)= 0$ for any $i >0$.
Let $j = \max \{j': x \in [0, 1/2^{j'}]\}$ and notice that $f_i^-(x)=0$ for any $i > j$. We have
\begin{eqnarray*}
\rho(x) &=& \sum_{i=0}^\infty \frac 1 {2^{i+1}} \rho_i(x)\\
&=&\frac 1 2 \cdot  \frac 1 2 (f_0^+ * f_0^+)(x) + \frac 1 2 \sum_{i=0}^j  \frac 1 {2^{i+1}} (f_i^- * f_i^-)\\
&=& x + \frac 1 2 \left[ \sum_{i=0}^{j-1} \frac 1 {2^{i+1}} 2^{2(i+1)} x + \frac 1 {2^{j+1}}(2^{j+2} - 2^{2(j+1)}x)\right]\\
&=& x+  \left(\sum_{i=0}^{j-1} 2^{i}\right)x +1 - 2^j x\\
&=& x+ (2^j-1)x+1 - 2^j x\\
& =&1\\
\end{eqnarray*}
which concludes the proof.
\end{proof}

\begin{figure}[t]
\begin{minipage}[b]{0.5\linewidth}
\includegraphics[width=0.8\linewidth]{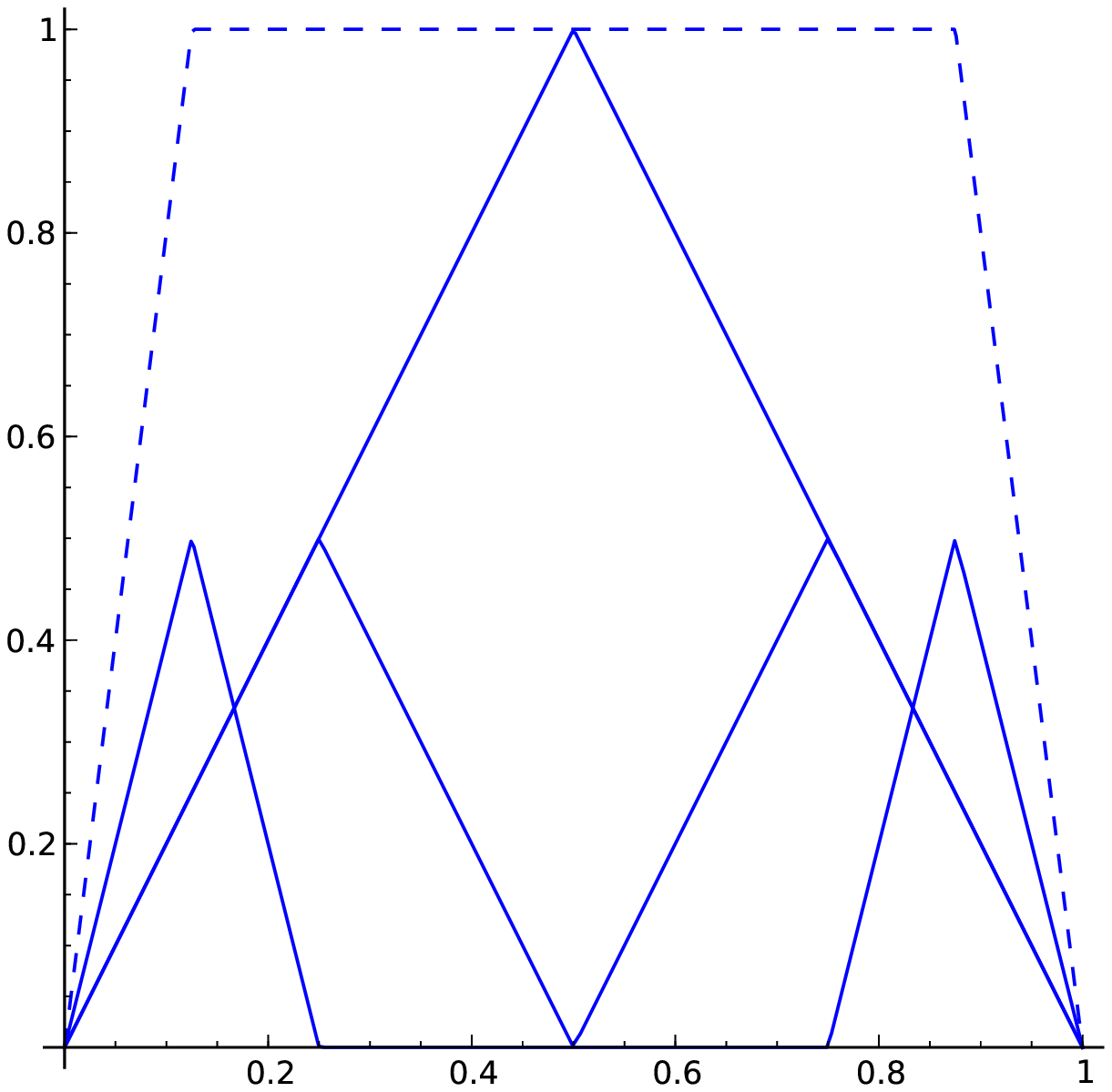}
\end{minipage}
\begin{minipage}[b]{0.5\linewidth}
\includegraphics[width=0.8\linewidth]{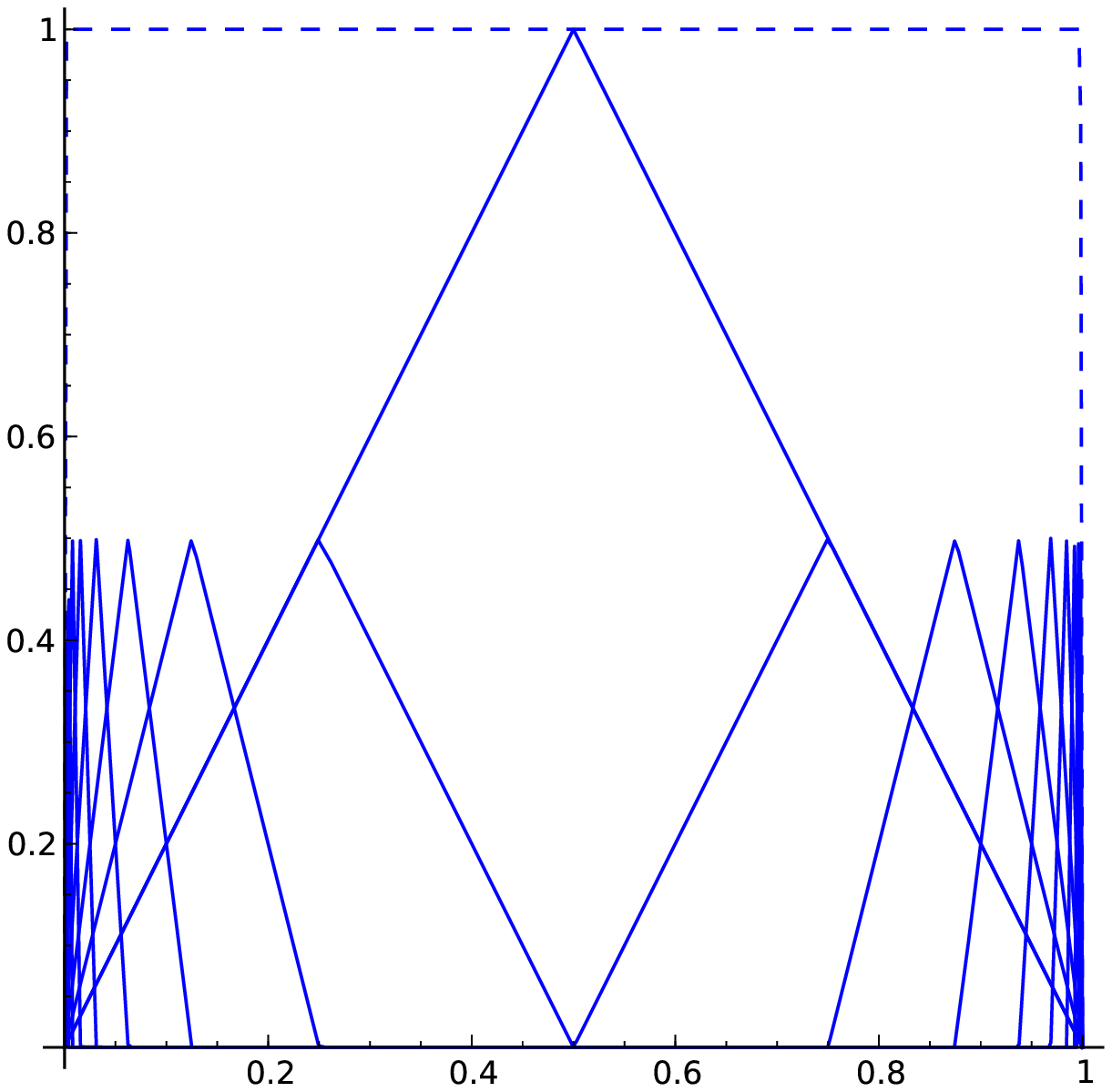}
\end{minipage}
\caption{The density functions $f_i^+$ and $f_i^-$, for $i\leq 3$ and $i\leq8$. Each
density function is scaled down by the probability of sampling it in Lemma~\ref{lm:tech}.
$f_0^+$ and $f_0^-$ are equal.
The dashed curves represent the sum of the represented density functions.}
\end{figure}

The induction hypothesis is that for any $k\geq 1$,
it is possible for $n-1$ players to each send a message to another party
such that he outputs an angle $\theta$ uniformly distributed
on $[\sum_{i=1}^{n-1} \alpha_i - \pi/2^{k} , \sum_{i=1}^{n-1} \alpha_i + \pi/2^{k}]$.

Before receiving their inputs, the players prepare the following random elements:
\begin{itemize}
\item an integer $j\geq 0$ chosen with probability $p(j)=1/2^{j+1}$,
\item $b$ uniform in $\{-1, +1\}$,
\item the random elements required to run
$\UAG(n-1,k+j+1)$,
\item the random elements required to run $\UAG(1, k+j+1)$.
\end{itemize}

The protocol proceeds as follows. The $n-1$ players first players
send to the referee the messages corresponding to
$\UAG(n-1, k+j+1)$. The referee uses them to prepare
$\theta_1$ uniform on $[\sum_{i=1}^{n-1} \alpha_i -\pi/2^{k+j+1}, 
\sum_{i=1}^{n-1} \alpha_i +\pi/2^{k+j+1}]$.
The $n$-th player sends to the referee the message
corresponding to $\UAG(1, k+j+1)$. The referee uses
it to prepare $\theta_2$ uniform on $[\alpha_n - \pi/2^{k+j+1}, 
\alpha_n + \pi/2^{k+j+1}]$.
Finally, the referee outputs 
$\theta = \theta_1+\theta_2+ b\cdot \frac{\pi}{2^{k}}\left(1 - \frac 1{2^j}\right)$.
 
We now analyze the protocol and prove that $\theta$ is uniform on the interval 
$\left[ \sum_{i=1}^n \alpha_i - \pi/2^k, \sum_{i=1}^n \alpha_i + \pi/2^k \right]$.
To apply Lemma~\ref{lm:tech}, we need to rescale the random variables $\theta_1$ and
$\theta_2$.
We split the term $b\cdot \frac \pi{2^k} ( 1- \frac 1 {2^j})$ in two parts and
think of each as a shift of $\theta_1$ and $\theta_2$
in a direction that depends on the bit $b$. Each angle is shifted in the same
direction.

Let $v_{1,j}^-$ and $v_{2,j}^-$ be uniform random variables on $[0, 1/2^j]$
and~$v_{1,j}^+$ and $v_{2,j}^+$ be uniform random variables on $[1-1/2^j, 1]$.
Let $\mathbf T_{1,j}$ denote the random variable
$\theta_{1|j}$, that is, the random variable generated by $\UAG(n-1, k+j+1)$ for
a fixed value of $j$.
The shifted random variable $ \mathbf T_{1,j}+ b \frac \pi {2^{k+1}}
\left(1 - \frac 1{2^j}\right) $ is uniform
\begin{itemize}
\item either on $\left [\sum_{i=1}^{n-1} \alpha_i - \pi/2^{k+1}, \sum_{i=1}^{n-1} \alpha_i - \pi/{2^{k+1}}+ \pi/2^{k+j}\right ]$ if $b=-1$,
\item or on $\left [\sum_{i=1}^{n-1} \alpha_i + \pi/2^{k+1} - \pi/2^{k+j}, \sum_{i=1}^{n-1} \alpha_i + \pi/{2^{k+1}}\right ]$ if $b=+1$.
\end{itemize}
Using, $v_{1,j}^+$ and $v_{1,j}^-$, we can rewrite
$$
\mathbf T_{1,j}+ b \frac \pi {2^{k+1}} \left(1 - \frac 1{2^j}\right) =
\begin{cases} \sum_{i=1}^{n-1} \alpha_i - \pi/2^{k+1} + v_{1,j}^- \cdot \pi/{2^k} &\text{ if $ b=-1$}\\
\sum_{i=1}^{n-1} \alpha_i - \pi/2^{k+1} + v_{1,j}^+ \cdot \pi/{2^k} &\text{ if $ b=+1$}\\
\end{cases}
$$
Similarly, let $\mathbf T_{2,j}$ denote the random variable
$\theta_{2|j}$.
Using $v_{2,j}^-$ and $v_{2,j}^+$, it can be written
$$
\mathbf T_{2,j}+ b \frac \pi {2^{k+1}} \left(1 - \frac 1{2^j}\right) =
\begin{cases} \alpha_n - \pi/2^{k+1} + v_{2,j}^- \cdot \pi/{2^k} &\text{ if $ b=-1$}\\
\alpha_n - \pi/2^{k+1} + v_{2,j}^+ \cdot \pi/{2^k} &\text{ if $ b=+1$}\\
\end{cases}
$$
For the sum, we get the expression
$$\theta = \mathbf T_{1,j} + \mathbf T_{2,j} + b\cdot \frac{\pi}{2^{k}}\left(1 - \frac 1{2^j}\right)
= \sum_{i=1}^n \alpha_i -\pi/2^k + v_{j,b} \cdot \pi/2^{k-1} $$
where $v_{j,b}$ is the sum of $v_{1,j}^+$ and $v_{2,j}^+$ if $b=+1$ and
the sum of $v_{1,j}^-$ and $v_{2,j}^-$ if $b=-1$.
According to Lemma~\ref{lm:tech}, when taking the expectation over $j$ and $b$,
$v_{j,b}$ is uniform on $[0,1$]. In consequence, $\theta$ is uniform
on the interval $\left [\sum_{i=1}^n \alpha_i -\pi/2^k, \sum_{i=1}^n \alpha_i +\pi/2^k\right]$.

It remains to bound the expected length of messages.
Denote $l_{n,k}$ the expected sum of the messages length.
We already know that $l_{1,k} \leq k$  for any~$k$.
Fix $n>1$. Analyzing our protocol, we get the induction:
\begin{eqnarray*}
l_{n,k} &=& \sum_{j\geq 0} \frac 1 {2^{j+1}} (l_{n-1, k+j+1} + l_{1,k+j+1}),\\
&\leq & \sum_{j\geq 0} \frac 1 {2^{j+1}} l_{n-1, k+j+1} + \sum_{j\geq 0} \frac{k+j+1}{2^{j+1}},\\
&\leq & \sum_{j \geq 0} \frac 1 {2^{j+1}} l_{n-1, k+j+1} + k.
\end{eqnarray*}
The induction hypothesis is that $l_{n-1,k+j+1} \leq (n-1)(n+k+j)$. We plus this expression and
get
\begin{eqnarray*}
l_{n+1,k} &\leq& \sum_{j\geq 0} \left(\frac  {(n-1)(nk+j)}{2^{j+1}} \right ) + k,\\
&\leq& (n-1)(n+k+1) +k,\\
&\leq& n(n+k),
\end{eqnarray*}
which concludes the proof.

\section{Conclusion}
We gave a protocol to simulate equatorial measurements on
the \npart GHZ state, using $O(n^2)$ bits on average.
Our protocol is in two parts. Firstly, we reduce the problem
to sampling vectors on regions of the $S^1$. Secondly,
we give a procedure to sample the vectors, called
Uniform Vector Sampling. This scheme is
inspired by the protocol of Toner and Bacon to simulate
von Neumann measurements on Bell States.

Our work leads to an obvious question. Is it possible
to transform our protocol into a protocol that is bounded
in the worst case? To solve this question, it enough to
give a protocol for $\UAG$ that use bounded communication
in the worst case.
Uniform Vector Sampling could also be considered as a task of independent
interest or be applied in other contexts.

Our work, like others on the same topic, considers only
equatorial measurements. The simulation of more general measurements
is an intriguingly hard question. The main difference is that they
lead to non-uniform marginals.
In the bipartite case, an analogous
problem arises when considering non-maximally entangled states.
It may seem that modifying local marginals is easy once the
correlation is simulated. Unfortunately, local transforms usually
also modify the full correlation.

\section*{Aknowledgement}
We thank Nicolas Gisin and Cyril Branciard  for interesting discussions.

\end{document}